\providecommand{\tabularnewline}{\\}
  \theoremstyle{definition}
  \newtheorem{defn}{\protect\definitionname}
\theoremstyle{plain}
\newtheorem{thm}{\protect\theoremname}
  \theoremstyle{plain}
  \newtheorem{lem}{\protect\lemmaname}
  \providecommand{\definitionname}{Definition}
  \providecommand{\lemmaname}{Lemma}
\providecommand{\theoremname}{Theorem}
\begin{document}
\global\long\def\bias{\mathrm{bias}}
\global\long\def\var{\mathrm{var}}
\global\long\def\mse{\mathrm{MSE}}
\global\long\def\mise{\mathrm{MISE}}
\global\long\def\lognormalkernel{\mathrm{LN}}
\global\long\def\gammakernel{\mathrm{G}}
\global\long\def\bskernel{\mathrm{BS}}
\global\long\def\igkernel{\mathrm{IG}}
\global\long\def\rigkernel{\mathrm{RIG}}
\global\long\def\euler{\mathrm{e}}

\title[Unified treatment of the asymptotics of asymmetric KDEs]{Unified treatment of the asymptotics of asymmetric kernel density estimators}

\author{Till Hoffmann}

\address{Department of Mathematics, Imperial College London, SW7 2AZ, United Kingdom}

\email{t.hoffmann13@imperial.ac.uk}

\author{Nick Jones}

\email{n.jones@imperial.ac.uk}
\begin{abstract}
We extend balloon and sample-smoothing estimators, two types of variable-bandwidth kernel density estimators, by a shift parameter and derive their asymptotic properties. Our approach facilitates the unified study of a wide range of density estimators which are subsumed under these two general classes of kernel density estimators. We demonstrate our method by deriving the asymptotic bias, variance, and mean (integrated) squared error of density estimators with gamma, log-normal, Birnbaum-Saunders, inverse Gaussian and reciprocal inverse Gaussian kernels. We propose two new density estimators for positive random variables that yield properly-normalised density estimates. Plugin expressions for bandwidth estimation are provided to facilitate easy exploratory data analysis.
\end{abstract}

\maketitle

\section{Introduction}

Kernel density estimation with fixed bandwidth has become the most common method to estimate unknown densities of continuous random variables. The standard kernel density estimator (KDE) is defined as 
\begin{equation}
\hat{f}\left(x\right)=\frac{1}{nh}\sum_{i=1}^{n}K\left(\frac{X_{i}-x}{h}\right),\label{eq:standard-kde}
\end{equation}
where $\left\{ X_{i}:i=1,\ldots,n\right\} $ is a set of $n$ i.i.d. samples of a univariate random variable $X$ drawn from a generating distribution $f$, $K$ is a non-negative function that integrates to unity, and $h$ is the bandwidth which controls the smoothness of the estimate \cite{Silverman1986}. We assume that $f$ is supported on the real line. Unless otherwise stated $\int$ will refer to an integral over the interval $\left(-\infty,\infty\right)$. 

Before outlining the remainder of this paper, we briefly review three topics for which we will provide a unified treatment: variable-bandwidth KDEs, shifted KDEs and KDEs with asymmetric kernels.

\emph{Variable-bandwidth KDEs: }Choosing the fixed bandwidth $h$ entails a trade-off. If the bandwidth is small, high-density regions are smoothed appropriately and we can recover small-scale features. Yet low-density regions will be undersmoothed. If the bandwidth is large, low-density regions are smoothed appropriately. Unfortunately, high-density regions will be oversmoothed and small-scale features will be masked. Terrell and Scott considered two variable-bandwidth KDEs to alleviate this problem \cite{Terrell1992}. \emph{Balloon estimators} have a bandwidth that depends on the \emph{evaluation point}, i.e. the point, $x$, at which $f\left(x\right)$ needs to be estimated,
\begin{equation}
\hat{f}\left(x\right)=\frac{1}{nh\left(x\right)}\sum_{i=1}^{n}K\left(\frac{X_{i}-x}{h\left(x\right)}\right).\label{eq:balloon-estimator}
\end{equation}
By contrast, the bandwidth of \emph{sample-smoothing estimators} depends on the sample associated with each kernel such that
\begin{equation}
\hat{f}\left(x\right)=\frac{1}{n}\sum_{i=1}^{n}\frac{1}{h\left(X_{i}\right)}K\left(\frac{X_{i}-x}{h\left(X_{i}\right)}\right).\label{eq:smoothing-estimator}
\end{equation}
The latter yield properly normalised density estimates, i.e. $\int dx\,\hat{f}\left(x\right)=1$, because each term of the sum in Eq.~\eqref{eq:smoothing-estimator} integrates to unity. In general, balloon estimators are not proper densities themselves.

\emph{Shifted KDEs: }Standard KDEs tend to overestimate densities in regions with few samples and underestimate densities in regions with many samples. Samiuddin and El-Sayyad proposed to shift samples slightly from low-density regions to high-density regions to achieve lower bias \cite{Samuiddin1990}. Hall and Minnotte extended this approach to decrease the bias further using data sharpening estimators \cite{Hall2002a}. KDEs with variable bandwidth and shifted samples were considered by Jones et al. in the context of sample-smoothing estimators \cite{Jones1994}. 

\emph{Asymmetric KDEs:} A literature on using asymmetric kernels for density estimation has developed in parallel: gamma \cite{Chen2000}, log-normal and Birnbaum-Saunders \cite{Jin2003}, as well as inverse Gaussian and reciprocal inverse Gaussian \cite{Scaillet2004} eliminate bias on or near the boundary of a (semi-)bounded interval. Whenever a new kernel is being considered, its asymptotic properties have to be carefully derived.

In the following, we provide a more general treatment of shifted sample-smoothing estimators and introduce shifted balloon estimators, and develop a generic framework to obtain the asymptotic properties of a range of kernel density estimators. After discussing mathematical preliminaries in Section~\ref{sec:preliminaries}, we define shifted balloon and sample-smoothing estimators in Sections~\ref{sec:balloon} and \ref{sec:smoothing}, respectively. We derive the asymptotic bias, variance and mean (integrated) squared error for each class of estimators. In Section~\ref{sec:application}, we apply our results to density estimators with gamma, log-normal, Birnbaum-Saunders, inverse Gaussian and reciprocal inverse Gaussian kernels, derive plugin expressions for bandwidth selection, and propose two new density estimators for density estimation on the positive real line. We conclude in Section~\ref{sec:discussion}.

\section{Preliminaries\label{sec:preliminaries}}

The estimators in Eqs.~\eqref{eq:standard-kde} to \eqref{eq:smoothing-estimator} can all be expressed as general weight function estimators defined by
\begin{align}
\hat{f}\left(x\right) & =\frac{1}{n}\sum_{i=1}^{\infty}W\left(X_{i},x\right),\label{eq:generalised-estimator}
\end{align}
where $W\left(y,x\right)$ is a non-negative function that peaks near $x=y$. Because the samples are i.i.d., the first moment of the estimator is
\begin{align}
\left\langle \hat{f}\left(x\right)\right\rangle  & =\left\langle W\left(X,x\right)\right\rangle ,\label{eq:generalised-mean}
\end{align}
where the expectation is with respect to the samples $X$. The second moment is
\begin{align*}
\left\langle \hat{f}^{2}\left(x\right)\right\rangle  & =\frac{1}{n^{2}}\left\langle \sum_{i,j=1}^{n}W\left(X_{i},x\right)W\left(X_{j},x\right)\right\rangle \\
 & =\frac{1}{n}\left\langle W^{2}\left(X,x\right)\right\rangle +\frac{n-1}{n}\left\langle W\left(X,x\right)\right\rangle ^{2}
\end{align*}
such that the variance becomes
\begin{align}
\var\hat{f}\left(x\right) & =\left\langle \hat{f}^{2}(x)\right\rangle -\left\langle \hat{f}(x)\right\rangle ^{2}\nonumber \\
 & =\frac{1}{n}\left(\left\langle W^{2}\left(X,x\right)\right\rangle -\left\langle W\left(X,x\right)\right\rangle ^{2}\right).\label{eq:generalised-variance}
\end{align}

The canonical criterion for assessing the local quality of density estimates is the mean squared error (MSE) and is defined as

\[
\mse\hat{f}\left(x\right)=\left\langle \left(f(x)-\hat{f}(x)\right)^{2}\right\rangle 
\]
for a given evaluation point $x$. We rewrite the MSE in terms of squared bias and variance such that 
\begin{align}
\mse\hat{f}(x) & =\bias^{2}\hat{f}(x)+\var\hat{f}(x),\nonumber \\
\text{where}\quad\bias\hat{f}(x) & =\left\langle \hat{f}(x)\right\rangle -f(x)\label{eq:generalised-bias}\\
 & =\left\langle W\left(X,x\right)\right\rangle -f\left(x\right).\nonumber 
\end{align}
The mean integrated squared error (MISE) is defined as
\[
\mise\hat{f}=\int dx\,\mse\hat{f}\left(x\right)
\]
and will be our criterion for the global quality of density estimates.

The balloon estimator in Eq.~\eqref{eq:balloon-estimator}, the sample-smoothing estimator in Eq.~\eqref{eq:smoothing-estimator}, and the estimators we consider in the following two sections can be expressed in terms of a kernel. Following Terrell and Scott, we formalise the notion of a kernel in the following definition \cite{Terrell1992}.
\begin{defn}
\label{def:order-p-kernel}An order-$p$ kernel is a univariate, non-negative function $K$ such that 
\begin{equation}
\begin{aligned}m_{k} & \equiv\int dz\,K(z)z^{k} & \kappa & \equiv\int dz\,K^{2}(z)\\
m_{k} & =\begin{cases}
1 & \text{if }k=0,\,p\\
0 & \text{if }0<k<p
\end{cases} & 0 & <\kappa<\infty,
\end{aligned}
\label{eq:kernel-properties}
\end{equation}
where $m_{k}$ is the $k^{\mathrm{th}}$ moment of the kernel.
\end{defn}

\section{Shifted balloon estimator\label{sec:balloon}}

We now provide our modified version of the balloon estimator.
\begin{defn}
A shifted balloon estimator is a kernel density estimator whose bandwidth depends on the point $x$ at which it is evaluated, i.e. 
\begin{equation}
\hat{f}(x)=\frac{1}{n}\sum_{i=1}^{n}\frac{1}{h(x)}K\left(\frac{X_{i}-x-h^{p}\left(x\right)\delta(x)}{h(x)}\right),\label{eq:balloon-definition}
\end{equation}
where $p$ is the order of the kernel $K$, and $\delta(x)$ is a shift which remains bounded as $n\rightarrow\infty$, i.e. $\lim_{n\rightarrow\infty}\left|\delta\left(x\right)\right|<\infty$.
\end{defn}
The kernel is thus shifted with respect to the sample by a small amount $h^{p}\left(x\right)\delta\left(x\right)$ which depends on the evaluation point and vanishes as the bandwidth decreases. Shifted balloon estimators do not yield properly-normalised density estimates in general, similar to their unshifted counterparts.

We need to consider the bias and variance to evaluate the MSE in the limit of small bandwidth.
\begin{thm}
\label{thm:balloon-moments} The mean and variance of a shifted balloon estimator with order-$p$ kernel are 
\begin{align}
\left\langle \hat{f}\left(x\right)\right\rangle  & =A_{0}\left(x\right)+A_{p}\left(x\right)+o\left(h^{p}\left(x\right)\right)\label{eq:balloon-mean}\\
\text{and}\quad\var\hat{f}_{h}\left(x\right) & =\frac{f\left(x\right)\kappa}{nh\left(x\right)}+o\left(n^{-1}h^{-1}\left(x\right)\right),\label{eq:balloon-var}\\
\text{where}\quad A_{k}\left(x\right) & =h^{k}\left(x\right)\sum_{j=k}^{\infty}\frac{f^{\left(j\right)}\left(x\right)}{k!\left(j-k\right)!}\left[h^{p}\left(x\right)\delta\left(x\right)\right]^{j-k}\label{eq:balloon-coefficient}
\end{align}
and $f^{(j)}(x)$ denotes the $j^{\text{th}}$ derivative of $f$ evaluated at $x$. The factorial is defined such that $0!\equiv1$.
\end{thm}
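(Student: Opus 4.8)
The plan is to compute the mean from Eq.~\eqref{eq:generalised-mean} and the variance from Eq.~\eqref{eq:generalised-variance}, working throughout with the weight function $W(y,x) = h^{-1}(x) K\left(\left[y - x - h^{p}(x)\delta(x)\right]/h(x)\right)$ read off from the definition in Eq.~\eqref{eq:balloon-definition}.

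First, for the mean I would write $\left\langle \hat f(x)\right\rangle = \int dy\, f(y)\, W(y,x)$ and substitute $z = \left[y - x - h^{p}(x)\delta(x)\right]/h(x)$, which cancels the $h^{-1}(x)$ prefactor and gives $\left\langle \hat f(x)\right\rangle = \int dz\, K(z)\, f\left(x + h^{p}(x)\delta(x) + h(x)z\right)$. Next I would Taylor-expand $f$ about $x$ with increment $h^{p}(x)\delta(x) + h(x)z$, and apply the binomial theorem to the $j$-th power so that each term factorises into $z^{k}$—which integrates against $K$ to the moment $m_{k}$—and a factor $\left[h^{p}(x)\delta(x)\right]^{j-k}$. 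Interchanging the order of the double sum, summing over $j \geq k$ for each fixed $k$, reproduces exactly the coefficient $A_{k}(x)$ of Eq.~\eqref{eq:balloon-coefficient}, so that $\left\langle \hat f(x)\right\rangle = \sum_{k=0}^{\infty} m_{k} A_{k}(x)$. Invoking the order-$p$ conditions of Definition~\ref{def:order-p-kernel}, namely $m_{0} = m_{p} = 1$ and $m_{k} = 0$ for $0 < k < p$, collapses this to $A_{0}(x) + A_{p}(x)$ plus the tail $\sum_{k>p} m_{k} A_{k}(x)$, which must be shown to be $o\!\left(h^{p}(x)\right)$.

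For the variance I would start from Eq.~\eqref{eq:generalised-variance}. The subtracted term $n^{-1}\left\langle W(X,x)\right\rangle^{2}$ is $O(n^{-1})$, since the mean computation already gives $\left\langle W(X,x)\right\rangle \to f(x)$. The dominant contribution is $n^{-1}\left\langle W^{2}(X,x)\right\rangle$; applying the same substitution leaves a single surviving factor of $h^{-1}(x)$, namely $n^{-1}h^{-1}(x) \int dz\, K^{2}(z)\, f\left(x + h^{p}(x)\delta(x) + h(x)z\right)$. Replacing $f$ by its leading value $f(x)$ and using $\int dz\, K^{2}(z) = \kappa$ yields the stated leading term $f(x)\kappa / (nh(x))$; the first correction is $O(n^{-1})$, which is $o\!\left(n^{-1}h^{-1}(x)\right)$ because $h(x) \to 0$.

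The main obstacle is the control of remainders rather than the formal expansion itself. The leading behaviour of $A_{k}(x)$ is $O\!\left(h^{k}(x)\right)$, so each individual tail term with $k > p$ is $o\!\left(h^{p}(x)\right)$; the delicate point is to bound the \emph{infinite} sum $\sum_{k>p} m_{k} A_{k}(x)$ and confirm it is genuinely of this order, which requires regularity of $f$ (enough bounded derivatives, or sufficient decay of the higher moments $m_{k}$) to guarantee convergence and to license the term-by-term integration. The corresponding subtlety in the mean calculation is justifying the interchange of the Taylor summation with the integral against $K$.
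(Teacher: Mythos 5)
Your proposal is correct and follows essentially the same route as the paper: the same change of variables, Taylor expansion of $f$ about $x$, binomial rearrangement and interchange of summation, and use of the order-$p$ moment conditions for the mean, together with the same leading-order treatment of $n^{-1}\left\langle W^{2}\right\rangle$ and absorption of the squared-mean term into $o\left(n^{-1}h^{-1}(x)\right)$ for the variance. The only cosmetic difference is that the paper packages the expansion into a standalone lemma applied twice (once with $\phi=K$, once with $\phi=K^{2}$) rather than carrying it out inline, and your closing remark about bounding the infinite tail $\sum_{k>p}m_{k}A_{k}(x)$ flags a regularity issue the paper itself leaves implicit.
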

A proof is given in Appendix \ref{app:proofs}. Substituting Eqs.~\eqref{eq:balloon-coefficient} and \eqref{eq:balloon-mean} into Eq.~\eqref{eq:generalised-bias} and dropping higher-order terms in $h\left(x\right)$, the bias is 
\begin{equation}
\bias\hat{f}\left(x\right)=h^{p}\left(x\right)\delta\left(x\right)f'\left(x\right)+\frac{h^{p}\left(x\right)}{p!}f^{\left(p\right)}\left(x\right)+o\left(h^{p}\left(x\right)\right),\label{eq:balloon-bias}
\end{equation}
which is the same as the expression derived by Terrell and Scott if we let $\delta\left(x\right)=0$ \cite{Terrell1992}. We thus see from Eq.~\eqref{eq:balloon-bias} that any dependence of the bias on the first derivative of the generating distribution is due to the shift between the kernel mean and the associated sample as Chen noted in the context of density estimators using gamma kernels \cite{Chen2000}. Adding Eq.~\eqref{eq:balloon-var} and the square of Eq.~\eqref{eq:balloon-bias} yields the MSE
\begin{multline*}
\mse\hat{f}\left(x\right)=\left[h^{p}\left(x\right)\delta\left(x\right)f'\left(x\right)+\frac{h^{p}\left(x\right)}{p!}f^{\left(p\right)}\left(x\right)\right]^{2}+\frac{f\left(x\right)\kappa}{nh\left(x\right)}\\
+o\left(n^{-1}h^{-1}\left(x\right)+h^{2p}\left(x\right)\right).
\end{multline*}

\section{Shifted sample-smoothing estimators\label{sec:smoothing}}

In the following, we discuss our modified version of the sample-smoothing estimator.
\begin{defn}
A shifted sample-smoothing estimator is a kernel density estimator whose bandwidth depends on the samples $X_{i}$, i.e. 
\begin{equation}
\hat{f}\left(x\right)=\frac{1}{n}\sum_{i=1}^{n}\frac{1}{h\left(X_{i}\right)}K\left(\frac{X_{i}-x+h^{p}\left(X_{i}\right)\delta\left(X_{i}\right)}{h\left(X_{i}\right)}\right),\label{eq:smoothing-definition}
\end{equation}
where $p$ is the order of the kernel $K$, and $\delta(X_{i})$ is a shift which remains bounded as $n\rightarrow\infty$, i.e. $\lim_{n\rightarrow\infty}\left|\delta\left(X_{i}\right)\right|<\infty$.
\end{defn}
Again, the kernel is shifted with respect to the sample by a small amount but the magnitude of the shift depends on the sample and not the evaluation point. Shifted sample-smoothing estimators are guaranteed to yield properly-normalised density estimates.

As in the previous section, we consider the mean and the variance assuming the bandwidth is small. See Appendix \ref{app:proofs} for a proof of the following theorem.
\begin{thm}
\label{thm:smoothing-moments} If 
\begin{equation}
z=\frac{y-x+h^{p}\left(y\right)\delta\left(y\right)}{h\left(y\right)}\label{eq:change-of-variables-condition}
\end{equation}
is a monotonic function of $y$ for all $x$, the mean and variance of a shifted sample-smoothing estimator with order-$p$ kernel are 
\begin{align}
\left\langle \hat{f}\left(x\right)\right\rangle  & =B_{0}\left(x\right)+B_{p}\left(x\right)+o\left(h^{p}\left(x\right)\right)\nonumber \\
\text{and}\quad\var\hat{f}\left(x\right) & =\frac{f(x)\kappa}{nh(x)}+o\left(n^{-1}h^{-1}\left(x\right)\right),\label{eq:smoothing-var}\\
\text{where}\quad B_{k} & =\sum_{j=k}^{\infty}\frac{1}{k!\left(j-k\right)!}\frac{d^{j}}{dx^{j}}\left(f\left(x\right)h^{k+p\left(j-k\right)}\left(x\right)\left(-\delta\left(x\right)\right)^{j-k}\right).\nonumber 
\end{align}

\end{thm}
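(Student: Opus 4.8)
The plan is to treat both moments through the general weight-function identities of Eqs.~\eqref{eq:generalised-mean} and \eqref{eq:generalised-variance}, specialised to the weight $W(y,x)=h^{-1}(y)K\!\left((y-x+h^{p}(y)\delta(y))/h(y)\right)$. Everything then reduces to the asymptotic evaluation, as $h(x)\rightarrow0$, of the two integrals $\langle W(X,x)\rangle=\int dy\,f(y)\,W(y,x)$ and $\langle W^{2}(X,x)\rangle=\int dy\,f(y)\,W^{2}(y,x)$. The organising device throughout is the substitution defined by Eq.~\eqref{eq:change-of-variables-condition}, and the monotonicity hypothesis is exactly what makes this a legitimate change of variables in $y$.

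For the mean I would proceed by a weak (test-function) argument, since the total-derivative form of $B_{k}$ is the signature of integration by parts. First I would pair $\langle\hat f(x)\rangle$ against a smooth, compactly-supported test function $\varphi$ and interchange the order of integration to obtain $\int dy\,f(y)\int dx\,\varphi(x)\,W(y,x)$. In the inner integral the kernel argument is linear in $x$, so the substitution $z=(y+h^{p}(y)\delta(y)-x)/h(y)$ is valid unconditionally and yields $\int dz\,K(z)\,\varphi\!\left(y+h^{p}(y)\delta(y)-h(y)z\right)$. Taylor-expanding $\varphi$ about $y$, expanding $(h^{p}(y)\delta(y)-h(y)z)^{m}$ by the binomial theorem, and integrating against $K$ reduces each term to a kernel moment $m_{l}$; the order-$p$ conditions of Definition~\ref{def:order-p-kernel} annihilate every contribution with $0<l<p$, retain $l=0$ and $l=p$, and push the $l>p$ contributions to order higher than $h^{p}$.

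I would then integrate by parts $m$ times in $y$ to strip the derivatives off $\varphi$, so that the coefficient multiplying $\varphi(y)$ can be read off as $\langle\hat f(x)\rangle$ after relabelling $y\to x$. Collecting the $l=0$ terms reproduces $B_{0}$ and the $l=p$ terms reproduce $B_{p}$, once the binomial coefficient is combined with the $1/m!$ prefactor and the sign carried by $(-h)^{p}$, matching the stated formula with $j=m$; the leftover $l>p$ terms are $o(h^{p}(x))$. The hard part will be the bookkeeping of this double sum over $m$ and $l$ together with justifying the interchange of summation and integration and controlling the Taylor remainders, so that the retained terms are genuinely $B_{0}+B_{p}$ and the discarded ones genuinely $o(h^{p})$; the fact that both $h$ and $\delta$ now depend on the integration variable is precisely what makes this more delicate than the balloon case, where they are constants pulled outside the integral.

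For the variance I would mirror the proof of Theorem~\ref{thm:balloon-moments}. The leading contribution is $n^{-1}\langle W^{2}(X,x)\rangle$, which I evaluate by applying the substitution \eqref{eq:change-of-variables-condition} directly in $y$ (this is where monotonicity is essential). Since $K^{2}$ localises the integrand at $y\approx x$, the Jacobian satisfies $dy=h(x)(1+o(1))\,dz$ and $f(y)=f(x)(1+o(1))$, giving $\langle W^{2}(X,x)\rangle=f(x)\kappa/h(x)+o(h^{-1}(x))$, and division by $n$ yields Eq.~\eqref{eq:smoothing-var}. The subtracted term $n^{-1}\langle W(X,x)\rangle^{2}=O(n^{-1})$ is negligible against $n^{-1}h^{-1}(x)$ because $h(x)\rightarrow0$.
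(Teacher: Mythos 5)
Your proposal is correct, and for the mean it takes a genuinely different route from the paper's. The paper's Lemma~\ref{lem:integral-smoothing} substitutes $z$ for $y$ directly in the integral (this is exactly where monotonicity of \eqref{eq:change-of-variables-condition} enters) and is then left with the Jacobian factor $\theta\left(y\left(z,x\right)\right)$ divided by $\left[\frac{d}{dy}\left(y-x+h^{p}\left(y\right)\delta\left(y\right)-zh\left(y\right)\right)\right]_{y=y\left(z,x\right)}$, which it evaluates by complex analysis: the factor is written as a Cauchy contour integral with a simple pole at $\tau=y\left(z,x\right)$, the integrand is re-expanded as a geometric series about $\tau=x$, and the residues of the resulting poles of order $j+1$ are summed --- the higher-order residue formula is precisely where the total derivatives $\frac{d^{j}}{dx^{j}}\left(\cdot\right)$ in $B_{k}$ come from. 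You instead dualise against a test function, substitute in the $x$-variable (where the kernel argument is linear, so no invertibility hypothesis is needed), Taylor-expand, and generate the total derivatives by $j$-fold integration by parts. Your sign bookkeeping does close up: the $\left(-1\right)^{l}$ carried by $\left(-h\left(y\right)z\right)^{l}$ in the binomial expansion combines with the $\left(-1\right)^{m}$ from $m$-fold integration by parts to give $\left(-1\right)^{m-l}$ attached to $\delta^{m-l}$, i.e. the $\left(-\delta\left(x\right)\right)^{j-k}$ of the statement; your attribution of the sign to ``$\left(-h\right)^{p}$'' is loose, but the mechanism is right. What each approach buys: yours is more elementary (no contour integration) and needs only smoothness of $f$, $h$, $\delta$ rather than the analyticity that the paper's residue argument implicitly assumes, and it exposes that monotonicity is really only needed for the variance; the price is that your expansion is first obtained weakly, so the pointwise $o\left(h^{p}\left(x\right)\right)$ statement requires the additional (routine, by continuity) step of upgrading the distributional identity before ordering the retained and discarded terms --- a step the paper's pointwise derivation gets for free, though both proofs share the same unaddressed issues about interchanging sums and integrals. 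Your variance argument coincides with the paper's: the change of variables with $\phi=K^{2}$ and $\theta=f/h$ gives $f\left(x\right)\kappa/\left(nh\left(x\right)\right)$, and the subtracted $n^{-1}\left\langle W\right\rangle ^{2}$ term is $O\left(n^{-1}\right)=o\left(n^{-1}h^{-1}\left(x\right)\right)$; incidentally, the paper's citation of Lemma~\ref{lem:integral-balloon} at that point in its proof is a typo for Lemma~\ref{lem:integral-smoothing}.
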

The condition in Eq.~\eqref{eq:change-of-variables-condition} is satisfied by most kernels although exceptions exist as we will see in the next section. Using the same argument as in the previous section and dropping higher-order terms in $h\left(x\right)$, we have

\begin{align}
\bias\hat{f}\left(x\right) & =-\frac{d}{dx}\left(f\left(x\right)h^{p}\left(x\right)\delta\left(x\right)\right)+\frac{1}{p!}\frac{d^{p}}{dx^{p}}\left(h^{p}\left(x\right)f\left(x\right)\right)+o\left(h^{p}\left(x\right)\right).\label{eq:smoothing-bias}
\end{align}
The expressions agree with Terrell and Scott if we let $\delta\left(x\right)=0$ \cite{Terrell1992}. Jones et al. provide an alternative derivation for second-order kernels, i.e. $p=2$ \cite{Jones1994}. The MSE is 
\begin{multline*}
\mse\hat{f}(x)=\left[-\frac{d}{dx}\left(f\left(x\right)h^{p}\left(x\right)\delta\left(x\right)\right)+\frac{1}{p!}\frac{d^{p}}{dx^{p}}\left(h^{p}\left(x\right)f\left(x\right)\right)\right]^{2}+\frac{\kappa f\left(x\right)}{nh\left(x\right)}\\
+o\left(n^{-1}h^{-1}\left(x\right)+h^{2p}\left(x\right)\right).
\end{multline*}

\section{Application\label{sec:application}}

Having derived the asymptotic properties of shifted balloon and shifted sample-smoothing estimators allows us to obtain the asymptotic properties of a range of kernel density estimators easily. Before applying our method, we need to make a distinction that is immaterial for symmetric kernels. Recall the definition of the general weight function estimator which we repeat here for ease of reference: 
\begin{equation}
\hat{f}\left(x\right)=\frac{1}{n}\sum_{i=1}^{n}W\left(X_{i},x\right).\label{eq:generalised-estimator-2}
\end{equation}
There are two fundamentally different choices for the weight function.
\begin{defn}
A weight function $W\left(y,x\right)$ is called \emph{proper} if and only if 
\[
\int dy\,W\left(y,x\right)=1.
\]
It is called \emph{improper} otherwise.
\end{defn}
Estimators with proper weight functions are guaranteed to be densities themselves because each term in Eq.~\eqref{eq:generalised-estimator-2} integrates to unity. Sample-smoothing estimators belong to this class. Estimators with improper weight functions, such as balloon estimators, do not in general yield normalised densities. If the weight function is symmetric with respect to exchange of its argument, the distinction becomes immaterial. Standard kernels with fixed bandwidth as defined in Eq.~\eqref{eq:standard-kde} are examples of symmetric weight functions if the kernel is an even function of its argument.

\subsection{Density estimation using improper weight functions}

Improper weight functions have recently become popular for estimating densities with (semi-)bounded support to avoid boundary bias. See Section 2.11 in \cite{Wand1995} for a detailed discussion of boundary bias. In particular, Brown and Chen used beta-distribution kernels for non-parametric regression on a finite interval \cite{Brown1999}. Chen proposed the use of gamma-distribution kernels to estimate densities of positive random variables \cite{Chen2000}, which spurred further research into positive density estimation using Birnbaum-Saunders and log-normal kernels \cite{Jin2003} as well as inverse Gaussian and reciprocal inverse Gaussian kernels \cite{Scaillet2004}. 

Intuitively, the contribution from any weight function becomes ever more sharply peaked as the bandwidth of the estimator decreases. Because the global properties become less important, we should be able to approximate general weight functions by simpler kernels that peak where the sample and evaluation point are close. We will see that the improper estimators in the literature are asymptotically equivalent to shifted balloon estimators as defined in Eq.~\eqref{eq:balloon-definition}. This observation enables us to derive their asymptotic properties using the general results from Section \ref{sec:balloon} instead of having to derive them independently. We consider the gamma-kernel density estimator proposed by Chen \cite{Chen2000} in detail and discuss other kernels in Section~\ref{subsec:other-kernels}.

After reparametrisation, Chen's improper estimator is given by
\[
\hat{f}\left(x\right)=\sum_{i=1}^{n}\mathrm{G}\left(X_{i};1+\frac{x}{\sigma^{2}},\sigma^{2}\right),
\]
where $\sigma$ is a bandwidth parameter, and $\mathrm{G}\left(t;k,\theta\right)=\frac{\theta^{-k}}{\Gamma\left(k\right)}x^{k-1}\exp\left(-\frac{t}{\theta}\right)$ denotes the probability density function of a gamma random variable $t$ with shape parameter $k$ and scale parameter $\theta$. If the bandwidth $\sigma$ is small, the shape parameter of the gamma distribution is large and it can be approximated by a Gaussian with the same mean \cite{Johnson1994}, i.e.
\[
\mathrm{G}\left(X_{i};1+\frac{x}{\sigma^{2}},\sigma^{2}\right)\rightarrow\mathcal{N}\left(X_{i};x+\sigma^{2},\sigma^{2}\left(x+\sigma^{2}\right)\right).
\]
The approximation takes the form of a shifted balloon estimator with Gaussian kernel and 
\begin{align*}
h\left(x\right) & =\sigma\sqrt{x+\sigma^{2}}\\
h^{2}\left(x\right)\delta\left(x\right) & =\sigma^{2}.
\end{align*}

Having represented Chen's improper estimator as a shifted balloon estimator, we can find the bias and variance using Eqs.~\eqref{eq:balloon-bias} and \eqref{eq:balloon-var}:
\begin{align*}
\bias\hat{f}\left(x\right) & =\sigma^{2}f'\left(x\right)+\frac{\sigma^{2}\left(x+\sigma^{2}\right)}{2}f''\left(x\right)+o\left(\sigma^{2}\right)\\
 & =\sigma^{2}\left(f'\left(x\right)+\frac{x}{2}f''\left(x\right)\right)+o\left(\sigma^{2}\right)\\
\text{and}\quad\var\hat{f}\left(x\right) & =\frac{f\left(x\right)}{2n\sigma\sqrt{\pi\left(x+\sigma^{2}\right)}}+o\left(n^{-1}\sigma^{-1}\right)\\
 & =\frac{f\left(x\right)}{2n\sigma\sqrt{\pi x}}+o\left(n^{-1}\sigma^{-1}\right),
\end{align*}
where we have used that Gaussians are second-order kernels with $\kappa=\frac{1}{2\sqrt{\pi}}$. The expressions agree with the ones derived by Chen. The MSE is
\[
\mse\hat{f}\left(x\right)=\sigma^{4}\left(f'\left(x\right)+\frac{x}{2}f''\left(x\right)\right)^{2}+\frac{f\left(x\right)}{2n\sigma\sqrt{\pi x}}+o\left(\sigma^{4}+n^{-1}\sigma^{-1}\right).
\]

\begin{figure}
\begin{centering}
\includegraphics{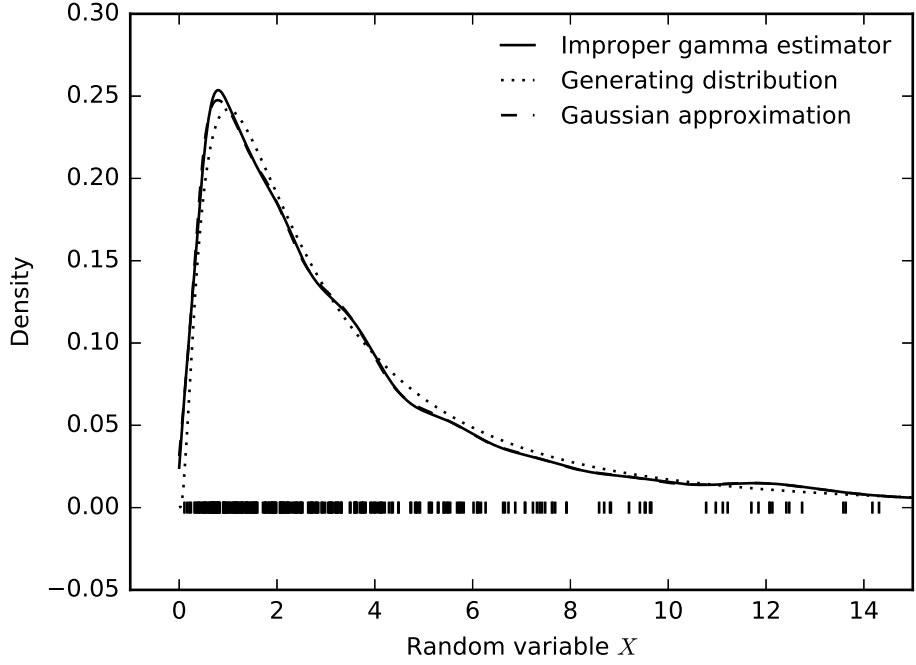}
\par\end{centering}

\caption{\label{fig:improper-gamma}Density estimate of $n=300$ samples drawn from a log-normal distribution with logarithmic mean $\mu=1$ and logarithmic variance $\Sigma^{2}=1$ using an improper gamma estimator (solid black line) and a Gaussian approximation (dashed black line). The generating distribution is shown as a dotted black line and individual samples are shown as markers on the horizontal axis. The bandwidth for both estimators was obtained using the plugin expression in Eq.~\eqref{eq:chen-plugin} which minimises the MISE in Eq.~\eqref{eq:chen-lognormal-mise} as shown in Figure~\ref{fig:bandwidth-comparison}.}
\end{figure}

Extending the work by Chen \cite{Chen2000}, we approximate $f\left(x\right)$ by a reference distribution with known functional form which enables us to obtain a plugin expression for selecting the bandwidth. The parametric form of the reference distribution is largely a free choice; we assume that $f(x)$ is approximately log-normal with logarithmic mean $\mu$ and variance $\Sigma^{2}$ for convenience. Integrating yields 
\begin{equation}
\mise\hat{f}=\sigma^{4}\frac{\exp\left(\frac{9\Sigma^{2}}{4}-3\mu\right)\left(12+20\Sigma^{2}+9\Sigma^{4}\right)}{128\sqrt{\pi}\Sigma^{4}}+\frac{\exp\left(\frac{\Sigma^{2}}{8}-\frac{\mu}{2}\right)}{2n\sigma\sqrt{\pi}}\label{eq:chen-lognormal-mise}
\end{equation}
and setting the derivative of the MISE with respect to $\sigma$ to zero gives the optimal scale parameter 
\begin{equation}
\sigma^{*}=\frac{2^{4/5}\Sigma\exp\left(\frac{\mu}{2}-\frac{17\Sigma^{2}}{40}\right)}{\left(12+20\Sigma^{2}+9\Sigma^{4}\right)^{1/5}}n^{-1/5}.\label{eq:chen-plugin}
\end{equation}
The expression allows us to obtain a bandwidth by estimating $\mu$ and $\Sigma$ from the samples without the need for computationally-expensive cross-validation. Plugin bandwidth estimators are particularly useful for preliminary data exploration. An example density estimate is shown in Figure~\ref{fig:improper-gamma}.

\subsection{Density estimation using proper weight functions}

Improper weight functions have the unappealing property that they do not yield properly normalised density estimates in general. Jeon and Kim proposed a density estimator using a proper gamma kernel to alleviate this problem \cite{Jeon2014}. After reparametrisation, their estimator is given by
\begin{equation}
\hat{f}\left(x\right)=\frac{1}{n}\sum_{i=1}^{\infty}\mathrm{G}\left(x;1+\frac{X_{i}}{\sigma^{2}},\sigma^{2}\right),\label{eq:jeon-kim-estimator}
\end{equation}
i.e. the role of the evaluation point and the samples have been swapped. The kernel can be approximated by a normal distribution if the bandwidth is small and the estimator takes the form of a shifted sample-smoothing estimator
\[
\mathrm{G}\left(x;1+\frac{X_{i}}{\sigma^{2}},\sigma^{2}\right)\rightarrow\mathcal{N}\left(x;X_{i}+\sigma^{2},\sigma^{2}\left(X_{i}+\sigma^{2}\right)\right).
\]
Using Eqs.~\eqref{eq:smoothing-bias} and \eqref{eq:smoothing-var}, the bias and variance are
\begin{align*}
\bias\hat{f}\left(x\right) & =-\sigma^{2}f'\left(x\right)+\frac{1}{2}\frac{d^{2}}{dx^{2}}\left(\sigma^{2}\left(x+\sigma^{2}\right)f\left(x\right)\right)+o\left(\sigma^{2}\right)\\
 & =\frac{\sigma^{2}x}{2}f''\left(x\right)+o\left(\sigma^{2}\right)\\
\text{and}\quad\var\hat{f}\left(x\right) & =\frac{f\left(x\right)}{2n\sigma\sqrt{\pi x}}+o\left(n^{-1}\sigma^{-1}\right).
\end{align*}
Jeon and Kim obtained an upper bound for the MSE of the estimator defined in Eq.~\eqref{eq:jeon-kim-estimator} but we can provide an explicit form: 
\[
\mse\hat{f}\left(x\right)=\frac{\sigma^{4}}{4}\left[xf''(x)\right]^{2}+\frac{f(x)}{2n\sigma\sqrt{\pi x}}+o\left(\sigma^{4}+n^{-1}\sigma^{-1}\right).
\]
Employing the log-normal distribution as a reference distribution, and optimising the MISE with respect to $\sigma$ yields the plugin expression 
\[
\sigma^{*}=\frac{2^{4/5}\Sigma\exp\left(\frac{\mu}{2}-\frac{17\Sigma^{2}}{40}\right)}{\left(12+4\Sigma^{2}+\Sigma^{4}\right)^{1/5}}n^{-1/5}.
\]

\begin{figure}
\begin{centering}
\includegraphics{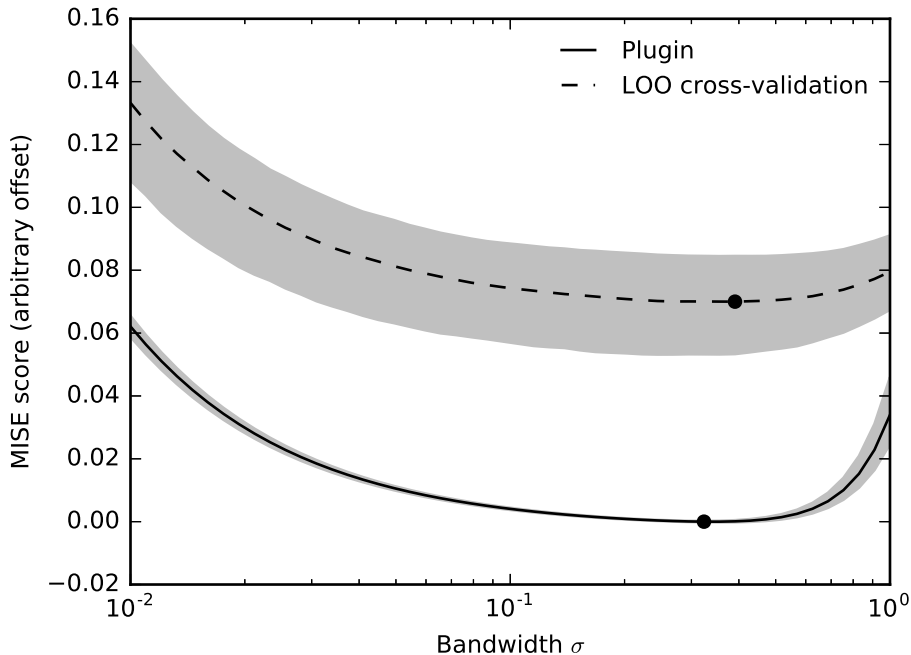}
\par\end{centering}

\caption{\label{fig:bandwidth-comparison}MISE profiles as a function of bandwidth $\sigma$ using a Gaussian approximation (solid line) and leave-one-out (LOO) cross-validation (dashed line). The shaded bands correspond to the 90\% confidence interval from one thousand independent samples of size $n=300$ drawn from a log-normal distribution with $\mu=\Sigma=1$. The offset between the two curves is arbitrary.}
\end{figure}
As discussed in Appendix~\ref{app:cross-validation}, cross-validation is straightforward for Jeon and Kim's estimator. A comparison between the plugin expression and leave-one-out cross-validation is shown in Figure~\ref{fig:bandwidth-comparison}. The results are encouraging even for a modest sample size of $n=300$. The MISE profile obtained from cross-validation is more variable because it depends on each individual sample whereas the asymptotic MISE profile only depends on the logarithmic mean and variance of the whole sample.

\subsection{Application to estimators with other kernels\label{subsec:other-kernels}}

\begin{sidewaystable}
\vspace{.65\textwidth}%
\begin{tabular*}{0.9\textheight}{@{\extracolsep{\fill}}llllll}
Distribution & $\left\langle t\right\rangle $ & Variance & Parametrisation & $h^{2}\left(x\right)\delta\left(x\right)$ & $h\left(x\right)$\tabularnewline
\hline 
$\gammakernel\left(t;k,\theta\right)=\frac{\theta^{-k}t^{k-1}}{\Gamma(k)}\exp\left(-\frac{t}{\theta}\right)$ & $k\theta$ & $k\theta^{2}$ & $\gammakernel\left(y;\frac{x}{\sigma^{2}}+1,\sigma^{2}\right)$ & $\sigma^{2}$ & $\sigma\sqrt{x+\sigma^{2}}$\tabularnewline
$\lognormalkernel\left(t;\mu,\sigma^{2}\right)=\frac{1}{\sqrt{2\pi}\sigma x}\exp\left(-\frac{\left(\log t-\mu\right)^{2}}{2\sigma^{2}}\right)$ & $\euler^{\mu+\frac{\sigma^{2}}{2}}$ & $\left\langle t\right\rangle ^{2}\left(\euler^{\sigma^{2}}-1\right)$ & $\lognormalkernel\left(y;\log x,\sigma^{2}\right)$ & $x\left(\euler^{\frac{\sigma^{2}}{2}}-1\right)$ & $x\euler^{\frac{\sigma^{2}}{2}}\sqrt{\euler^{\sigma^{2}}-1}$\tabularnewline
$\bskernel\left(t;\sigma,\lambda\right)=\frac{(1+t\lambda)}{2\sigma\sqrt{2\pi\lambda t^{3}}}\exp\left(-\frac{(t\lambda-1)^{2}}{2t\lambda\sigma^{2}}\right)$ & $\frac{2+\sigma^{2}}{2\lambda}$ & $\frac{\sigma^{2}\left(4+5\sigma^{2}\right)}{4\lambda^{2}}$ & $\bskernel\left(y;\sigma,x^{-1}\right)$ & $\frac{x\sigma^{2}}{2}$ & $x\sigma\sqrt{1+5\sigma^{2}/4}$\tabularnewline
$\igkernel\left(t;\mu,\lambda\right)=\sqrt{\frac{\lambda}{2\pi t^{3}}}\exp\left(-\frac{\lambda(t-\mu)^{2}}{2t\mu^{2}}\right)$ & $\mu$ & $\frac{\mu^{3}}{\lambda}$ & $\igkernel\left(y;x,\sigma^{-2}\right)$ & 0 & $\sigma x^{3/2}$\tabularnewline
$\rigkernel\left(t;\mu,\lambda\right)=\sqrt{\frac{\lambda}{2\pi t}}\exp\left(-\frac{\lambda(\mu t-1)^{2}}{2t\mu^{2}}\right)$ & $\frac{1}{\mu}+\frac{1}{\lambda}$ & $\frac{1}{\lambda\mu}+\frac{2}{\lambda^{2}}$ & $\rigkernel\left(y;\frac{1}{x-\sigma^{2}},\sigma^{-2}\right)$ & 0 & $\sigma\sqrt{x+\sigma^{2}}$\tabularnewline
\end{tabular*}

\vspace{2em}
\caption{\label{tbl:kernels}List of probability distributions with their mean and variance. The fourth column shows the parametrisation used for improper weight functions, where $x$ is the evaluation point and $y$ is the associated sample. The parametrisation for proper weight functions can be obtained by swapping $x$ and $y$. The fifth column lists the shift between the mean of the kernel and the sample. The effective bandwidth is shown in the last column. }

\vspace{2em}
\begin{tabular*}{0.9\textheight}{@{\extracolsep{\fill}}llllll}
 &  & \multicolumn{2}{l}{Improper weight function} & \multicolumn{2}{l}{Proper weight function}\tabularnewline
Kernel with ref. & $n\times\var_{\sigma}$  & $\bias_{\sigma}$  & $n^{1/5}\times\sigma*$ & $\bias_{\sigma}$  & $n^{1/5}\times\sigma*$\tabularnewline
\hline 
G \cite{Chen2000,Jeon2014} & $\frac{f(x)}{2\sigma\sqrt{\pi x}}$  & $\sigma^{2}\left(f'(x)+\frac{xf''(x)}{2}\right)$ & $\frac{2^{4/5}\Sigma\exp\left(\frac{\mu}{2}-\frac{17\Sigma^{2}}{40}\right)}{\left(12+20\Sigma^{2}+9\Sigma^{4}\right)^{1/5}}$ & $\frac{\sigma^{2}}{2}xf''\left(x\right)$ & $\frac{2^{4/5}\Sigma\exp\left(\frac{\mu}{2}-\frac{17\Sigma^{2}}{40}\right)}{\left(12+4\Sigma^{2}+\Sigma^{4}\right)^{1/5}}$\tabularnewline
LN \cite{Jin2003} & $\frac{f(x)}{2\sqrt{\pi}\sigma x}$  & $\frac{\sigma^{2}x}{2}\left(f'\left(x)+xf''\left(x\right)\right)\right)$ & $\frac{2^{4/5}\Sigma\exp\frac{\Sigma^{2}}{20}}{\left(12+4\Sigma^{2}+\Sigma^{4}\right)^{1/5}}$ & $\frac{\sigma^{2}}{2}\frac{d}{dx}\left(x\frac{d}{dx}\left(xf\left(x\right)\right)\right)$ & $\frac{2^{4/5}\Sigma\exp\frac{\Sigma^{2}}{20}}{\left(12+4\Sigma^{2}+\Sigma^{4}\right)^{1/5}}$\tabularnewline
BS \cite{Jin2003} & \multicolumn{5}{l}{same as log-normal}\tabularnewline
IG \cite{Scaillet2004} & $\frac{f(x)}{2\sqrt{\pi}\sigma x^{3/2}}$  & $\frac{\sigma^{2}}{2}x^{3}f''\left(x\right)$ & $\frac{2^{4/5}\Sigma\exp\left(\frac{7\Sigma^{2}}{40}-\frac{\mu}{2}\right)}{\left(12+68\Sigma^{2}+225\Sigma^{4}\right)^{1/5}}$ & \multicolumn{2}{c}{method not applicable}\tabularnewline
RIG \cite{Scaillet2004} & $\frac{f(x)}{2\sigma\sqrt{\pi x}}$  & $\frac{\sigma^{2}}{2}xf''\left(x\right)$ & $\frac{2^{4/5}\Sigma\exp\left(\frac{\mu}{2}-\frac{17\Sigma^{2}}{40}\right)}{\left(12+4\Sigma^{2}+\Sigma^{4}\right)^{1/5}}$ & $\sigma^{2}\left(f'(x)+\frac{xf''(x)}{2}\right)$ & $\frac{2^{4/5}\Sigma\exp\left(\frac{\mu}{2}-\frac{17\Sigma^{2}}{40}\right)}{\left(12+20\Sigma^{2}+9\Sigma^{4}\right)^{1/5}}$\tabularnewline
\end{tabular*}

\vspace{2em}
\caption{\label{tbl:asymptotic-properties}List of kernels with their asymptotic variance. The bias and optimal plugin bandwidth $\sigma^{*}$ are shown separately for proper and improper estimators. }
\end{sidewaystable}
We use the methods developed above to derive the asymptotic properties of estimators with proper and improper weight functions. The gamma, log-normal (LN), Birnbaum-Saunders (BS), inverse Gaussian (IG) and reciprocal inverse Gaussian (RIG) distributions are listed in Table~\ref{tbl:kernels}. To serve as a kernel, we parametrise the distributions in terms of the evaluation point $x$, associated sample $y$, and bandwidth parameter $\sigma$. In the limit $\sigma\rightarrow0$, all kernels converge to a normal distribution \cite{Johnson1994} with mean $x+h^{2}\left(x\right)\delta\left(x\right)$ and variance $h^{2}\left(x\right)$ such that our method is applicable. The parametrisation for proper estimators can be obtained by exchanging the role of the evaluation point and the associated sample.

The asymptotic bias and variance for proper and improper estimators are listed in Table~\ref{tbl:asymptotic-properties}. The expressions for the asymptotics of improper estimators agree with the expressions in the original papers after reparametrisation. Because improper BS and LN estimators have a higher-order dependence on the evaluation point than the improper gamma estimator, they have smaller bias near the boundary which has been confirmed by simulations \cite{Jin2003}. The same applies to the improper IG estimator because its bias is cubic in the evaluation point.

The proper estimators with LN, BS, and RIG kernels are new contributions although the LN kernel is equivalent to a log-transform of the samples followed by standard kernel density estimation using Gaussian kernels (see Section~2.11 in \cite{Silverman1986}), which is widely used. The method we have developed is not applicable to proper estimators with RIG kernels because the assumption that $z\left(y,x\right)=\frac{y-x}{\sigma y^{3/2}}$ is monotonic in $y$ is violated.  All estimators achieve the same $n^{-4/5}$ convergence rate as Gaussian KDEs because their MISE is of the form $\sigma^{4}E+\left(\sigma n\right)^{-1}F$, where $E$ and $F$ are numbers that do not depend on $n$ or $\sigma$.

\section{Discussion\label{sec:discussion}}

We have extended balloon and sample-smoothing estimators by a shift parameter and derived their asymptotic properties. Our approach facilitates the unified study of a wide range of density estimators which are subsumed under these two general classes of kernel density estimators. We have demonstrated our method by obtaining the asymptotics of estimators with improper gamma, log-normal, Birnbaum-Saunders, inverse Gaussian and reciprocal inverse Gaussian. Recently, an estimator with proper gamma kernel was proposed \cite{Jeon2014}; we have obtained the first explicit expression for its bias. We have proposed estimators with proper Birnbaum-Saunders and reciprocal inverse Gaussian kernels which have the distinct advantage of yielding normalised density estimates. Plugin expressions for bandwidth estimation are provided for both proper and improper estimators using the log-normal distribution as a reference distribution.

Given the plethora of density estimators, which ones should be used? If the generating density is expected to have a large gradients, the proper gamma estimator and the improper inverse Gaussian (IG) or reciprocal inverse Gaussian (RIG) estimators are suitable because their bias is independent of the first derivative. The improper log-normal, Birnbaum-Saunders, IG and RIG estimators as well as the proper gamma estimator have relatively low bias near the origin and are appropriate for samples that are clustered near zero. Estimators whose effective bandwidth has a higher-order dependence on the evaluation point should be used for heavily-skewed data to avoid spurious wiggles in the tails of the distribution. We believe that proper estimators are preferable because they are guaranteed to yield normalised densities.

In future work, we would like to apply our approach to further kernels and conduct a full simulation study of the finite-sample properties of the estimators.

\appendix

\section{\label{app:proofs}Proofs}
\begin{lem}
\label{lem:integral-balloon}The integral

\[
L=\int\frac{dy}{h\left(x\right)}\phi\left(\frac{y-x-h^{p}\left(x\right)\delta\left(x\right)}{h\left(x\right)}\right)\theta\left(y\right)
\]
is given by
\begin{align*}
L & =\sum_{k=0}^{\infty}A_{k}\left(x\right)\int dz\,\phi\left(z\right)z^{k},\\
\text{where}\quad A_{k}\left(x\right) & =h^{k}\left(x\right)\sum_{j=k}^{\infty}\frac{\theta^{\left(j\right)}\left(x\right)}{k!\left(j-k\right)!}\left[h^{p}\left(x\right)\delta\left(x\right)\right]^{j-k}.
\end{align*}
\end{lem}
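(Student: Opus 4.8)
The plan is to reduce $L$ to a standard moment expansion by a change of variables followed by a Taylor expansion of $\theta$. First I would substitute $z=\left(y-x-h^{p}\left(x\right)\delta\left(x\right)\right)/h\left(x\right)$, so that $y=x+h^{p}\left(x\right)\delta\left(x\right)+h\left(x\right)z$ and $dy=h\left(x\right)\,dz$. The Jacobian factor $h\left(x\right)$ cancels the $1/h\left(x\right)$ prefactor, leaving
\[
L=\int dz\,\phi\left(z\right)\,\theta\left(x+h^{p}\left(x\right)\delta\left(x\right)+h\left(x\right)z\right).
\]
This isolates the entire dependence on the shift and the bandwidth inside the argument of $\theta$.

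Next I would Taylor-expand $\theta$ about $x$, writing $\theta\left(x+u\right)=\sum_{j=0}^{\infty}\theta^{\left(j\right)}\left(x\right)u^{j}/j!$ with $u=h^{p}\left(x\right)\delta\left(x\right)+h\left(x\right)z$, and then apply the binomial theorem to expand $u^{j}=\left(h^{p}\left(x\right)\delta\left(x\right)+h\left(x\right)z\right)^{j}$ into a sum over powers of $z$. Collecting powers of $z$ produces a double sum whose coefficients involve $\theta^{\left(j\right)}\left(x\right)$, the factor $h^{k}\left(x\right)$, the power $\left[h^{p}\left(x\right)\delta\left(x\right)\right]^{j-k}$, and the binomial factor $j!/\left(k!\left(j-k\right)!\right)$. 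Substituting this expansion into the integral and interchanging summation with integration makes $\int dz\,\phi\left(z\right)z^{k}$ appear as the $k$-th moment factor.

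The remaining step is bookkeeping: reorganising the double sum so the outer index is $k$. Swapping the order of summation converts $\sum_{j=0}^{\infty}\sum_{k=0}^{j}$ into $\sum_{k=0}^{\infty}\sum_{j=k}^{\infty}$, and the factor $1/j!$ from the Taylor series cancels against the $j!$ in the binomial coefficient, leaving precisely $1/\left(k!\left(j-k\right)!\right)$. This identifies the bracketed inner sum as $A_{k}\left(x\right)$ and yields the stated form $L=\sum_{k=0}^{\infty}A_{k}\left(x\right)\int dz\,\phi\left(z\right)z^{k}$.

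I expect the main obstacle to be analytic rather than algebraic: justifying the termwise interchange of the infinite summation and the integration, which requires the Taylor series of $\theta$ to converge on the relevant range and all moments $\int dz\,\phi\left(z\right)z^{k}$ to exist. For a fully rigorous statement one would impose dominated-convergence-type hypotheses, for instance analyticity of $\theta$ with controlled growth of its derivatives together with sufficiently fast decay of $\phi$. Alternatively, since the downstream application only needs terms up to order $h^{p}\left(x\right)$, one could truncate the Taylor expansion with a Lagrange remainder and absorb the tail into the $o\left(h^{p}\left(x\right)\right)$ term, thereby avoiding any question of convergence of the full series.
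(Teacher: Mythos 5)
Your proposal follows exactly the paper's own proof: the same change of variables $z=\left(y-x-h^{p}\left(x\right)\delta\left(x\right)\right)/h\left(x\right)$, the same Taylor expansion of $\theta$ about $x$, and the same binomial-theorem-plus-summation-swap to collect the coefficients $A_{k}\left(x\right)$. Your closing remark on justifying the interchange of summation and integration is a sound observation that the paper itself passes over in silence, but it does not change the route.
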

\begin{proof}
We make the change of variables 
\[
z=\frac{y-x-h^{p}\left(x\right)\delta\left(x\right)}{h\left(x\right)}
\]
such that, holding $x$ constant,
\begin{align*}
dz & =\frac{dy}{h\left(x\right)}\\
\text{and}\quad L & =\int dz\,\phi\left(z\right)\theta\left(x+h^{p}\left(x\right)\delta\left(x\right)+zh\left(x\right)\right).
\end{align*}
Expanding $\theta$ as a Taylor series about $x$ yields
\[
L=\int dz\,\phi\left(z\right)\sum_{j=0}^{\infty}\frac{\theta^{\left(j\right)}\left(x\right)}{j!}\left(h^{p}\left(x\right)\delta\left(x\right)+zh\left(x\right)\right)^{j}.
\]
Using the binomial theorem and exchanging the order of summation gives
\begin{align*}
L & =\int dz\,\phi\left(z\right)\sum_{j=0}^{\infty}\frac{\theta^{\left(j\right)}\left(x\right)}{j!}\sum_{k=0}^{j}\binom{j}{k}z^{k}h^{k+p\left(j-k\right)}\left(x\right)\delta^{j-k}\left(x\right)\\
 & =\sum_{k=0}^{\infty}A_{k}\left(x\right)\int dz\,\phi\left(z\right)z^{k},\\
\text{where}\quad A_{k}\left(x\right) & =h^{k}\left(x\right)\sum_{j=k}^{\infty}\frac{\theta^{\left(j\right)}\left(x\right)}{k!\left(j-k\right)!}\left[h^{p}\left(x\right)\delta\left(x\right)\right]^{j-k}.
\end{align*}

\end{proof}

\begin{proof}[Proof of Theorem~\ref{thm:balloon-moments}]
From Eqs.~\eqref{eq:generalised-mean} and \eqref{eq:balloon-definition}, the first moment of a shifted balloon estimator is
\[
\left\langle \hat{f}\left(x\right)\right\rangle =\int\frac{dy}{h\left(x\right)}\,K\left(\frac{y-x-h^{p}\left(x\right)\delta\left(x\right)}{h\left(x\right)}\right)f\left(y\right).
\]
Let $\phi\left(z\right)=K\left(z\right)$ and $\theta\left(y\right)=f\left(y\right)$. Then by Lemma~\ref{lem:integral-balloon}
\begin{align}
\left\langle \hat{f}\left(x\right)\right\rangle  & =\sum_{k=0}^{\infty}A_{k}\left(x\right)\int dz\,K\left(z\right)z^{k},\label{eq:balloon-mean-proof}\\
\text{where}\quad A_{k}\left(x\right) & =h^{k}\left(x\right)\sum_{j=k}^{\infty}\frac{f^{\left(j\right)}\left(x\right)}{k!\left(j-k\right)!}\left[h^{p}\left(x\right)\delta\left(x\right)\right]^{j-k}.\nonumber 
\end{align}
Recall that by Definition~\ref{def:order-p-kernel}, the integral in Eq.~\eqref{eq:balloon-mean-proof} vanishes for $0<k<p$ and is equal to unity for $k=0,\,p$ such that
\[
\left\langle \hat{f}\left(x\right)\right\rangle =A_{0}\left(x\right)+A_{p}\left(x\right)+o\left(h^{p}\left(x\right)\right).
\]

According to Eqs.~\eqref{eq:generalised-variance} and \eqref{eq:balloon-definition}, the variance of a shifted balloon estimator is
\begin{equation}
\var\hat{f}\left(x\right)=\frac{1}{n}\int\frac{dy}{h^{2}\left(x\right)}\,K^{2}\left(\frac{y-x-h^{p}\left(x\right)\delta\left(x\right)}{h\left(x\right)}\right)f\left(y\right)-\frac{\left\langle \hat{f}\left(x\right)\right\rangle ^{2}}{n}.\label{eq:balloon-var-proof}
\end{equation}
We use Lemma~\ref{lem:integral-balloon} with $\phi\left(z\right)=K^{2}\left(z\right)$and $\theta\left(y\right)=f\left(y\right)/h\left(x\right)$ such that the variance becomes
\[
\var\hat{f}\left(x\right)=\frac{f\left(x\right)\kappa}{nh\left(x\right)}+o\left(n^{-1}h^{-1}\left(x\right)\right).
\]
The second term in Eq.~\eqref{eq:balloon-var-proof} is absorbed by $o\left(n^{-1}h^{-1}\left(x\right)\right)$ because of its zero-order dependence on $h$$\left(x\right)$.\end{proof}
\begin{lem}
\label{lem:integral-smoothing}The integral 
\[
L=\int\frac{dy}{h\left(y\right)}\phi\left(\frac{y-x+h^{p}\left(y\right)\delta\left(y\right)}{h\left(y\right)}\right)\theta\left(y\right)
\]
is given by
\begin{align*}
L & =\sum_{k=0}^{\infty}B_{k}\left(x\right)\int dz\,\phi\left(z\right)z^{k},\\
\text{where}\quad B_{k}\left(x\right) & =\sum_{j=k}^{\infty}\frac{1}{k!\left(j-k\right)!}\frac{d^{j}}{dx^{j}}\left(\theta\left(x\right)h^{k+p\left(j-k\right)}\left(x\right)\left(-\delta\left(x\right)\right)^{j-k}\right)
\end{align*}
if 
\begin{equation}
z\left(y,x\right)=\frac{y-x+h^{p}\left(y\right)\delta\left(y\right)}{h\left(y\right)}\label{eq:smoothing-change-of-variables}
\end{equation}
is a monotonic function of $y$ for all $x$ such that the inverse $y\left(z,x\right)$ exists.\end{lem}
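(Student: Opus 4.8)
The plan is to mirror the proof of Lemma~\ref{lem:integral-balloon} while accounting for the fact that both $h$ and $\delta$ now depend on the integration variable $y$ rather than on the evaluation point $x$. First I would invoke the monotonicity hypothesis on $z(y,x)$ in Eq.~\eqref{eq:smoothing-change-of-variables} to change variables from $y$ to $z$. Writing $g(y)=h^p(y)\delta(y)$ for the shift and $y=y(z,x)$ for the inverse, the integral becomes $L=\int dz\,\phi(z)\,\Theta(z,x)$, where $\Theta(z,x)=\frac{\theta(y)}{h(y)}\,\partial_z y$. Unlike the balloon case, the Jacobian $\partial_z y$ is nontrivial because $h$ varies with $y$, so the bulk of the work is to compute the Taylor coefficients of $\Theta$ in $z$ about $z=0$ and to identify them with $B_k(x)$.

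The key simplification I would exploit is that $x$ enters $z=(y-x+g(y))/h(y)$ only through the isolated term $-x$, so that $\partial_x z|_y=-1/h(y)$ holds exactly. Differentiating the identity $z(y(z,x),x)=z$ then yields $\partial_z y=h(y)\,\partial_x y$, and hence the clean operator identity $\Theta(z,x)=\theta(y)\,\partial_x y=\partial_x\bigl[G(y(z,x))\bigr]$, where $G'=\theta$. Because $\partial_x$ and $\partial_z$ commute, the coefficient of $z^k$ in $\Theta$ equals $\frac{d}{dx}$ of the coefficient of $z^k$ in $G(y(z,x))$; it therefore suffices to expand $G(y(z,x))$ in powers of $z$.

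Next I would expand $G(y(z,x))$ using the implicit relation $y=x+\bigl(z\,h(y)-g(y)\bigr)$, that is, a perturbation $w(y)=z\,h(y)-g(y)$ of $y=x$. Treating the bandwidth, and hence $z\,h-g$, as small in line with the surrounding small-bandwidth asymptotics, the Lagrange reversion theorem gives $G(y)=G(x)+\sum_{n\ge1}\frac{1}{n!}\frac{d^{n-1}}{dx^{n-1}}\bigl[\theta(x)\,(z\,h(x)-g(x))^n\bigr]$. Applying the binomial theorem to $(z\,h-g)^n$ and substituting $g=h^p\delta$ isolates the coefficient of $z^k$ as $\gamma_k(x)=\sum_{n\ge k}\frac{1}{k!\,(n-k)!}\frac{d^{n-1}}{dx^{n-1}}\bigl[\theta(x)\,h^{k+p(n-k)}(x)\,(-\delta(x))^{n-k}\bigr]$. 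Differentiating once more, $B_k=\frac{d}{dx}\gamma_k$ raises each inner derivative order from $n-1$ to $n$, and relabelling $n\to j$ reproduces the stated $B_k(x)$ exactly; collecting terms gives $L=\sum_k B_k(x)\int dz\,\phi(z)z^k$.

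The main obstacle is the middle step: recognising that the $x$-dependence factors out as $\partial_x G(y(z,x))$, which converts the awkward $z$-Jacobian into a single outer $x$-derivative and is precisely what produces the $\frac{d^j}{dx^j}$ acting on whole products in $B_k$, in contrast to the derivatives hitting only $\theta$ in $A_k$. Identifying the implicit equation as a Lagrange-reversion problem, and verifying that the binomial bookkeeping in $z$ and $\delta$ lines up with the index $j-k$, is the part that needs care; the remaining manipulations are the same formal interchange of summation, integration, and asymptotic truncation already used for Lemma~\ref{lem:integral-balloon}.
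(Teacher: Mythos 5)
Your proposal is correct, and it reaches the paper's expansion by a genuinely different route. After the common change of variables, the paper evaluates the Jacobian-weighted term $\theta(y(z,x))/\bigl[\tfrac{d}{dy}\bigl(y-x+h^{p}(y)\delta(y)-zh(y)\bigr)\bigr]_{y=y(z,x)}$ by writing it as a contour integral $\tfrac{1}{2\pi i}\oint_{\gamma}d\tau\,\theta(\tau)/\bigl(\tau-x+h^{p}(\tau)\delta(\tau)-zh(\tau)\bigr)$, expanding the integrand geometrically about $\tau=x$, and summing the residues of the resulting poles of order $j+1$; this is in effect a self-contained residue-calculus derivation of the Lagrange--B\"urmann expansion. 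You instead observe that $\partial_{x}z\vert_{y}=-1/h(y)$ holds exactly, hence $\partial_{z}y=h(y)\,\partial_{x}y$ and the integrand collapses to $\partial_{x}\bigl[G(y(z,x))\bigr]$ with $G'=\theta$, after which you invoke the Lagrange reversion theorem for $y=x+\bigl(zh(y)-h^{p}(y)\delta(y)\bigr)$ and differentiate once in $x$; the binomial bookkeeping then matches $B_{k}$ term by term, including $k=0$, where $\tfrac{d}{dx}G(x)=\theta(x)$ supplies the $j=0$ term that your formula for $\gamma_{0}$ only captures if the $n=0$ entry is read as an antiderivative. What your route buys is the avoidance of complex analysis and a structural explanation of why whole products sit under $\tfrac{d^{j}}{dx^{j}}$ in $B_{k}$ (they arise from one outer $x$-derivative), in contrast to the derivatives hitting only $\theta$ in $A_{k}$; the cost is importing Lagrange reversion as a known theorem, which is essentially what the paper proves inline via Cauchy's theorem. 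Both arguments are formal to the same degree: your step of treating $zh-g$ as a small perturbation plays exactly the role of the paper's implicit requirement that the geometric series converge on the contour (neither holds uniformly over the whole range of $z$ integration), so you lose no rigor relative to the paper.
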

\begin{proof}
We make the change of variables in Eq.~\eqref{eq:smoothing-change-of-variables} such that, holding $x$ constant, 
\begin{align}
dz & =\left(\frac{\frac{d}{dy}\left(y-x+h^{p}\left(y\right)\delta\left(y\right)\right)}{h\left(y\right)}-\frac{y-x+h^{p}\left(y\right)\delta\left(y\right)}{h\left(y\right)}\frac{h'\left(y\right)}{h\left(y\right)}\right)dy\nonumber \\
 & =\frac{dy}{h\left(y\right)}\frac{d}{dy}\left(y-x+h^{p}\left(y\right)\delta\left(y\right)-zh\left(y\right)\right)\nonumber \\
\text{and}\quad L & =\int dz\,\phi\left(z\right)\times\frac{\theta\left(y\left(z,x\right)\right)}{\left[\frac{d}{dy}\left(y-x+h^{p}\left(y\right)\delta\left(y\right)-zh\left(y\right)\right)\right]_{y=y\left(z,x\right)}}.\label{eq:smoothing-transformed}
\end{align}

For fixed $z$, we express the fraction in the integrand of Eq.~\eqref{eq:smoothing-transformed} as the contour integral 
\begin{equation}
\ell=\frac{1}{2\pi i}\oint_{\gamma}d\tau\,\frac{\theta\left(\tau\right)}{\tau-x+h^{p}\left(\tau\right)\delta\left(\tau\right)-zh\left(\tau\right)},\label{eq:contour-integral}
\end{equation}
where $\tau$ is an integration variable. The contour $\gamma$ encloses the points defined by $\tau=y\left(z,x\right)$ and $\tau=x$. The simple pole at $\tau=y\left(z,x\right)$ has residue (see Theorem 8.15 in \cite{Howie2008}) 
\[
\frac{\theta\left(y\left(z,x\right)\right)}{\left[\frac{d}{dy}\left(y-x+h^{p}\left(y\right)\delta\left(y\right)-zh\left(y\right)\right)\right]_{y=y\left(z,x\right)}}
\]
such that the contour integral in Eq.~\eqref{eq:contour-integral} is indeed equal to the second term in Eq.~\eqref{eq:smoothing-transformed} by Cauchy's residue theorem.

Expanding $\ell$ as a power series about $\tau=x$ yields
\begin{align*}
\ell & =\frac{1}{2\pi i}\oint_{\gamma}d\tau\,\frac{\theta\left(y\right)}{\tau-x}\sum_{j=0}^{\infty}\left(\frac{zh\left(\tau\right)-h^{p}\left(x\right)\delta\left(\tau\right)}{\tau-x}\right)^{j}\\
 & =\frac{1}{2\pi i}\oint_{\gamma}d\tau\,\theta\left(y\right)\sum_{j=0}^{\infty}\sum_{k=0}^{j}\binom{j}{k}\frac{z^{k}h^{k+p\left(j-k\right)}\left(\tau\right)\left(-\delta\left(\tau\right)\right)^{j-k}}{\left(\tau-x\right)^{j+1}}.
\end{align*}
The second equality follows by the binomial theorem. The $j^{\text{th}}$ term gives rise to a pole of order $j+1$ at $\tau=x$ with residue (see Theorem 8.17 in \cite{Howie2008})
\[
R_{j}=\sum_{k=0}^{j}\frac{z^{k}}{k!\left(j-k\right)!}\frac{d^{j}}{dx^{j}}\left(\theta\left(x\right)h^{k+p\left(j-k\right)}\left(x\right)\left(-\delta\left(x\right)\right)^{j-k}\right)
\]
such that $\ell=\sum_{j=0}^{\infty}R_{j}$. Substituting back into Eq.~\eqref{eq:smoothing-transformed} and exchanging the order of summation yields
\begin{align*}
L & =\int dz\,\phi\left(z\right)\sum_{j=0}^{\infty}\sum_{k=0}^{j}\frac{z^{k}}{k!\left(j-k\right)!}\frac{d^{j}}{dx^{j}}\left(\theta\left(x\right)h^{k+p\left(j-k\right)}\left(x\right)\left(-\delta\left(x\right)\right)^{j-k}\right)\\
 & =\sum_{k=0}^{\infty}B_{k}\left(x\right)\int dz\,\phi\left(z\right)z^{k}\\
\text{where}\quad B_{k}\left(x\right) & =\sum_{j=k}^{\infty}\frac{1}{k!\left(j-k\right)!}\frac{d^{j}}{dx^{j}}\left(\theta\left(x\right)h^{k+p\left(j-k\right)}\left(x\right)\left(-\delta\left(x\right)\right)^{j-k}\right).
\end{align*}

\end{proof}

\begin{proof}[Proof of Theorem~\ref{thm:smoothing-moments}]
From Eqs.~\eqref{eq:generalised-mean} and \eqref{eq:smoothing-definition}, the first moment of a shifted sample-smoothing estimator is
\[
\left\langle \hat{f}\left(x\right)\right\rangle =\int\frac{dy}{h\left(y\right)}\,K\left(\frac{y-x+h^{p}\left(y\right)\delta\left(y\right)}{h\left(y\right)}\right)f\left(y\right).
\]
Let $\phi\left(z\right)=K\left(z\right)$ and $\theta\left(y\right)=f\left(y\right)$. Then by Lemma~\ref{lem:integral-smoothing}
\begin{align}
\left\langle \hat{f}\left(x\right)\right\rangle  & =\sum_{k=0}^{\infty}B_{k}\left(x\right)\int dz\,K\left(z\right)z^{k},\label{eq:smoothing-mean-proof}\\
\text{where}\quad B_{k}\left(x\right) & =\sum_{j=k}^{\infty}\frac{1}{k!\left(j-k\right)!}\frac{d^{j}}{dx^{j}}\left(f\left(x\right)h^{k+p\left(j-k\right)}\left(x\right)\left(-\delta\left(x\right)\right)^{j-k}\right).\nonumber 
\end{align}
Recall that by Definition~\ref{def:order-p-kernel}, the integral in Eq.~\eqref{eq:smoothing-mean-proof} vanishes for $0<k<p$ and is equal to unity for $k=0,\,p$ such that
\[
\left\langle \hat{f}\left(x\right)\right\rangle =B_{0}\left(x\right)+B_{p}\left(x\right)+o\left(h^{p}\left(x\right)\right).
\]

According to Eqs.~\eqref{eq:generalised-variance} and \eqref{eq:smoothing-definition}, the variance of a shifted sample-smoothing estimator is
\begin{equation}
\var\hat{f}\left(x\right)=\frac{1}{n}\int\frac{dy}{h^{2}\left(y\right)}\,K^{2}\left(\frac{y-x+h^{p}\left(y\right)\delta\left(y\right)}{h\left(y\right)}\right)f\left(y\right)-\frac{\left\langle \hat{f}\left(x\right)\right\rangle ^{2}}{n}.\label{eq:smoothing-var-proof}
\end{equation}
We use Lemma~\ref{lem:integral-balloon} with $\phi\left(z\right)=K^{2}\left(z\right)$and $\theta\left(y\right)=f\left(y\right)/h\left(y\right)$ such that the variance becomes
\[
\var\hat{f}\left(x\right)=\frac{f\left(x\right)\kappa}{nh\left(x\right)}+o\left(n^{-1}h^{-1}\left(x\right)\right).
\]
The second term in Eq.~\eqref{eq:smoothing-var-proof} is absorbed by $o\left(n^{-1}h^{-1}\left(x\right)\right)$ because of its zero-order dependence on $h$$\left(x\right)$.
\end{proof}

\section{Cross-validation\label{app:cross-validation}}

The quantity 
\begin{equation}
M=\int dx\,\hat{f}^{2}\left(x\right)-\frac{2}{n}\sum_{i=1}^{n}\hat{f}_{i}\left(X_{i}\right)\label{eq:mise-estimator}
\end{equation}
is an unbiased estimator of the MISE of the density estimator $\hat{f}\left(x\right)$, where $\hat{f_{i}}\left(X_{i}\right)$ is the density estimated from all samples except $X_{i}$ evaluated at $X_{i}$. Details can be found in Section~3.4.3 of \cite{Silverman1986}. The second term in Eq.~\eqref{eq:mise-estimator} is straightforward to evaluate whereas the first requires integration of the density estimate over the whole domain. Fortunately, the integral is tractable for the proper gamma estimator in Eq.~\eqref{eq:jeon-kim-estimator}:
\begin{align*}
\int dx\,\hat{f}\left(x\right) & =\frac{1}{n^{2}}\sum_{i,j=1}^{n}\int dx\,\mathrm{G}\left(x;1+\frac{X_{i}}{\sigma^{2}},\sigma^{2}\right)\mathrm{G}\left(x;1+\frac{X_{j}}{\sigma^{2}},\sigma^{2}\right)\\
 & =\frac{1}{n^{2}}\sum_{i,j=1}^{n}\frac{\Gamma\left(1+\frac{X_{i}+X_{j}}{\sigma^{2}}\right)2^{-1-\frac{X_{i}+X_{j}}{\sigma^{2}}}}{\Gamma\left(1+X_{i}/\sigma^{2}\right)\Gamma\left(1+X_{j}/\sigma^{2}\right)}.
\end{align*}

\printbibliography
\end{document}